
\def\AAAs          {\ensuremath{\Aut(A\,{\oplus}\,A^*_{})}}
\def\AAs           {\ensuremath{A\,{\oplus}\,A^*_{}}}
\def\AAVect        {\ensuremath{A{\oplus}A\text-\mathrm{vect}}}
\def\As            {A^*_{\phantom|}}
\def\Aut           {\mathrm{Aut}}
\def\AUT           {\mathrm{AUT}}
\def\AVect         {\ensuremath{A\text-\mathrm{vect}}}
\def\be            {\begin{equation}}
\def\bearl         {\begin{array}{l}}
\def\bearll        {\begin{array}{ll}}
\def\Bfi           {S_\text{B}}           

\def\cala          {{\ensuremath{\mathcal A}}}
\def\calc          {{\ensuremath{\mathcal C}}}
\def\cald          {{\ensuremath{\mathcal D}}}

\def\cdo           {\,{\cdot}\,}

\def\cir           {\,{\circ}\,}
\def\Cls           {S_\text{kin}}

\def\complex       {{\ensuremath{\mathbb C}}}
\def\complexx      {{\ensuremath{\mathbb C}^\times_{}}}

\def\deta          {\delta}

\def\ee            {\end{equation}}
\def\eE            {{\rm e}}
\def\eear          {\end{array}}
\def\End           {{\ensuremath{\mathrm{End}}}}

\def\erf           {\eqref }
\def\eq            {\,{=}\,}
\def\findim        {fini\-te-di\-men\-si\-o\-nal}
\def\Fp            {\mathbb F_p}
\def\Fpx           {\mathbb F_p^\times}

\def\hutbeta       {\xi_\beta}
\def\id            {\mbox{\sl id}}
\def\Id            {\mbox{\sl Id}}

\def\idti          {\mbox{\tiny\sl id}}
\def\ii            {{\rm i}}

\def\iN            {\,{\in}\,}

\newcommand\labl[1]{\label{#1}\ee}
\def\le            {\,{\leq}\,}
\def\Mod           {\mbox{\rm -mod}}

\newcommand\nxl[1] {\\[#1mm]}
\newcommand\Nxl[1] {\\[-1.3em]\\[#1mm]}

\def\OAAs          {\ensuremath{\Oq(A{\oplus}A^*_{})}}
\def\OAAsp         {\ensuremath{\Oq(A^{(p)}{\oplus}A^{(p)\,*}_{})}}

\newcommand\ome[2] {\omega_{#1;#2}}

\def\onemat        {\mbox{\small $1\!\!$}1}

\def\Oq            {\mathrm O_q}
\def\ord           {\mathrm{ord}}

\def\pii           {\tau}
\def\qquand        {\qquad{\rm and}\qquad}
\def\R             {{\ensuremath{\mathbb R}}}

\def\rev           {{\mathrm{rev}}}

\def\srr           {\hspace{-.2pt}\reflectbox{$\backslash$}\hspace{-2.5pt}\reflectbox{$\backslash$}\hspace{-.6pt}}
\def\srrad         {\srr_{\!_{\text{ad}}}}

\def\Times         {\,{\times}\,}
\def\To            {\,{\to}\,}
\def\Vect          {\ensuremath{\mathrm{vect}}}

 \newcommand\void[1]{}

\def\Wey           {S_\text{e-m}}
\newcommand\xmapsto[1]{\shortmid\!\xrightarrow{#1}}
\def\Z             {{\ensuremath{\mathcal Z}}}
\def\zet           {{\ensuremath{\mathbb Z}}}


\documentclass[12pt]{article}
\usepackage{latexsym, amsmath, amsthm, amsfonts} 
\usepackage{enumerate, amssymb, xspace, xypic}
\usepackage[all]{xy}
\usepackage[mathscr]{eucal}
\usepackage{graphicx} \usepackage{rotating}
\usepackage{epstopdf,hyperref}
\usepackage{color}

\setlength\textwidth{17cm} \hoffset -20mm
\setlength\textheight{23.3cm} \topmargin= -18mm

\newtheorem{thm}{Theorem}

\newtheorem{cor}[thm]{Corollary}

\newtheorem{fact}[thm]{Fact}

\theoremstyle{definition}
\newtheorem{Example}[thm]{Example}

\newtheorem{rem}[thm]{Remark}

\newcommand\eqpic[4]{\begin{eqnarray}
                   \begin{picture}(#2,#3){}\end{picture}\nonumber\\
                   \raisebox{-#3pt}{ \begin{picture}(#2,#3) #4 \end{picture} }
                   \label{#1} \\~\nonumber \end{eqnarray} }
\newcommand\Eqpic[4]{\begin{eqnarray}
                   \begin{picture}(#2,#3){}\end{picture}\nonumber\\
                   \raisebox{-#3pt}{ \begin{picture}(#2,#3) #4 \end{picture} }
                   \nonumber \\[3pt]~\label{#1} \end{eqnarray} }
\newcommand\includepic[1]   {{\begin{picture}(0,0)(0,0)
                            \scalebox{.48}{\includegraphics{./
                            pic_fpsv_#1.eps}}\end{picture}}}
\newcommand\Includepic[1]   {{\begin{picture}(0,0)(0,0)
                            \scalebox{.31}{\includegraphics{./
                            pic_fpsv_#1.eps}}\end{picture}}}

\begin{document}

\def\cir{\,{\circ}\,} 
\numberwithin{equation}{section} \numberwithin{thm}{section}
                                     
\begin{flushright}
   {\sf ZMP-HH/14-09}\\
   {\sf Hamburger$\;$Beitr\"age$\;$zur$\;$Mathematik$\;$Nr.$\;$ 509}\\[2mm]
\end{flushright}
\vskip 3.5em
                                   
\begin{center}
\begin{tabular}c \Large\bf 
On the Brauer groups of symmetries 
 \\[3mm] \Large\bf 
of abelian  Dijkgraaf-Witten theories
\end{tabular}\vskip 2.6em
                                     
  ~J\"urgen Fuchs\,$^{\,a}$,~~ Jan Priel\,$^{\,b}$,~~
  ~Christoph Schweigert\,$^{\,b}$~~~and~~~Alessandro Valentino$^{\,b}$

\vskip 9mm

  \it$^a$
  Teoretisk fysik, \ Karlstads Universitet\\
  Universitetsgatan 21, \ S\,--\,651\,88\, Karlstad \\[9pt]
  \it$^b$
  Fachbereich Mathematik, \ Universit\"at Hamburg\\
  Bereich Algebra und Zahlentheorie\\
  Bundesstra\ss e 55, \ D\,--\,20\,146\, Hamburg
                   
\end{center}
                     
\vskip 5.3em

\noindent{\sc Abstract}\\[3pt]
Symmetries of three-dimensional topological field theories are naturally
defined in terms of invertible topological surface defects.
Symmetry groups are thus Brauer-Picard groups.
We present a gauge theoretic realization of all symmetries of abelian 
Dijkgraaf-Witten theories. The symmetry group for a Dijkgraaf-Witten 
theory with gauge group a finite abelian group $A$, and with
vanishing 3-cocycle, is generated by group automorphisms of 
$A$, by automorphisms of the trivial Chern-Simons 2-gerbe on 
the stack of $A$-bundles, and by partial e-m dualities.
 \\
We show that transmission functors naturally extracted from extended 
topological field theories with surface defects give a physical realization 
of the bijection between invertible bimodule categories of a fusion category
$\cala$ and braided auto-equivalences of its Drinfeld center $\Z(\cala)$.
The latter provides the labels for bulk Wilson lines; it follows that a 
symmetry is completely characterized by its action on bulk Wilson lines.

\newpage

\section{Symmetries of abelian Dijkgraaf-Witten theories}\label{intro}
 
Dijkgraaf-Witten theories are extended topological field theories that 
have a mathematically precise gauge theoretic formulation with
finite gauge group. In that setting, the fields of the Dijkgraaf-Witten 
theory with gauge group $G$ are obtained by first considering $G$-bundles, 
to which then in a second step a linearization procedure is applied (see 
\cite{mort6} for a recent description). In the present note we investigate 
the notion of symmetries of three-dimensional Dijkgraaf-Witten theories, 
regarded as extended 1-2-3-dimensional topological field theories. To keep 
the presentation simple we restrict ourselves to the case that the gauge 
group is an abelian group, which we denote by $A$.

\paragraph{Braided auto-equivalences of bulk Wilson lines.}
The task of understanding symmetries in Dijkgraaf-Witten theories can be 
approached from two different angles, either algebraically or gauge 
theoretically. From a purely algebraic point of view, one would consider the 
modular category of bulk Wilson lines, which is the representation category
$\cald(A)\Mod$ of the Drinfeld double of $A$. Symmetries should then in 
particular induce braided auto-equivalences of $\cald(A)\Mod$. 

The group of braided auto-equivalences (up to monoidal natural 
equivalence) can be described as follows. Denote by $A^*_{}$ the group of 
complex characters of $A$.  The group \AAs\ comes with a natural quadratic 
form $q\colon \AAs\To\complexx$, given by $q(g{+}\chi) \eq \chi(g)$ for 
$g\,{+}\,\chi\iN \AAs$. The automorphism group of \AAs\ then has a
subgroup, denoted by \OAAs, consisting of those group automorphisms 
$\varphi$ which preserve this form, i.e.\ satisfy 
$q(\varphi(z) )\eq q(z)$ for all $z\iN \AAs$. Now the group of 
braided auto-equivalences is isomorphic to this group $\OAAs$ 
\cite{enoM}. Simple objects of $\cald(A)\Mod$, and thus simple labels for 
bulk Wilson lines of the Dijkgraaf-Witten theory with gauge group $A$, are in 
bijection with elements of $\AAs$; a braided auto-equivalence induces the 
natural action of the corresponding element of \OAAs\ on the group \AAs.

In this approach the auto-equivalences of $\cald(A)\Mod$ are not intrinsically 
realized in the Dijkgraaf-Witten theory as a gauge theory. It is therefore
not clear whether every braided auto-equivalence of the category of
bulk Wilson lines preserves all
aspects of the three-dimensional topological field theory 
so that it can indeed be regarded as a full-fledged symmetry of the theory.
It is not clear either whether a braided auto-equivalence would then describe
a symmetry uniquely. There might be several different 
realizations, or also none at all, of the auto-equivalences
on other field theoretic quantities, such as boundary conditions.

\paragraph{Universal kinematical symmetries.}
It is thus important to find a field theoretic realization of the
auto-equivalences, relating to the fact that Dijkgraaf-Witten theories can 
be formulated as gauge theories. At the same time we then get additional 
insight into the structure of the group \OAAs. {}From a gauge theoretic point
of view it is natural to expect that the symmetries of the stack 
$\mathrm{Bun}(G)$ of $G$-bundles are symmetries of both classical and 
quantum Dijkgraaf-Witten theories.\,%
 \footnote{~Actually, a general Dijkgraaf-Witten theory involves a
 3-cocycle $\omega\iN Z^3(G,\complexx)$. Here we only consider
 the case of trivial $\omega$, and hence do not expect any
 compatibility relations between the automorphism and $\omega$.}
One might call these symmetries universal kinematical symmetries --
kinematical, because they are symmetries of the kinematical setting, i.e.\
$G$-bundles; and universal, because the manifold on which the $G$-bundles
are defined does not enter.
The symmetries of $\mathrm{Bun}(G)$ form the 2-group $\AUT(G)$, i.e.\ the 
category whose objects are group automorphisms $\varphi\colon G\To G$ and
whose morphisms $\varphi_1\To\varphi_2$ are given by group elements $h\iN G$ 
that satisfy $\varphi_2(g) \eq h\,\varphi_1(g)\,h^{-1}$ for all $g\iN G$.
Since in the case of our interest the group $A$ is abelian, we can safely 
ignore the morphisms in the category $\AUT(A)$ and work with the ordinary 
automorphism group $\Aut(A)$ of the group $A$. 

The group $\Aut(A)$ of symmetries of $\mathrm{Bun}(A)$ can be identified 
in a natural way with a subgroup of the group $\OAAs$ of braided 
auto-equivalences. Indeed, for any $\alpha\iN \Aut(A)$, the automorphism 
$\alpha \,{\oplus}\, (\alpha^{-1})^*_{\phantom*}$ of \AAs\ belongs to 
\OAAs, where $(\alpha)^*_{} \iN \Aut(A^*_{})$ is defined by
$[\alpha^*_{}\chi](a) \,{:=}\, \chi(\alpha(a))$
for all $\chi \iN A^*_{}$ and all $a\iN A$. But this argument is
purely group theoretical, and it is not clear at this point whether
the embedding has any physical relevance and relates symmetries of
bundles to braided auto-equivalences of bulk Wilson lines.

\paragraph{Universal dynamical symmetries.}
The realization of Dijkgraaf-Witten theories as gauge theories 
leads to even more symmetries. Apart from a finite group $G$, a three-cocycle
$\omega\iN Z^3(G,
   \linebreak[0]
U(1))$ is another ingredient of a Dijkgraaf-Witten
theory. Geometrically this cocycle is interpreted \cite{willS}
as a (Chern-Si\-mons) 2-gerbe on the stack $\mathrm{Bun}(G)$ of $G$-bundles,
and we may think of $\omega$ heuristically as a topological Lagrangian.
In the present note we restrict ourselves to the case of vanishing
cocycle $\omega$, corresponding to a trivial 2-gerbe. Still, the
automorphism group of the trivial 2-gerbe is a non-trivial 3-group: 
it is the 3-group of 1-gerbes on $G$. It is thus again natural to 
expect that this 3-group provides us with symmetries of the 
Dijkgraaf-Witten theory with gauge group $G$. We call these symmetries
dynamical universal symmetries, as they involve
symmetries of the topological Lagrangian. 

By the results of \cite{willS}, the objects of the 3-group of 1-gerbes on 
$G$ are 2-cocycles on $G$; isomorphism classes are described by elements of 
the group cohomology $H^2(G,\complexx)$.  The group generated by classical 
kinematical and dynamical symmetries has the structure of a semi-direct 
product, $H^2(G,\complexx)\,{\rtimes}\,\mathrm{Aut}(G)$.
By \cite[Prop.\,4.1]{niRie} this group is isomorphic to the automorphism 
group of the fusion category $G$-\Vect\ of $G$-graded vector spaces.
Indeed, this fusion category enters in the construction of Dijkgraaf-Witten
theories as topological field theories of Turaev-Viro type.

If $G \eq A$ is abelian, cohomology classes in $H^2(A,\complexx)$ are
in bijection with alternating bicharacters. 
(An alternating bicharacter is a map $\beta\colon A\Times A\To \complexx$
that is a group homomorphism in each argument and satisfies
$\beta(a,a) \eq 1$ for all $a\iN A$, and thus 
$\beta(a_1,a_2) \eq \beta(a_2,a_1)^{-1}$ for all $a_1,a_2\iN A$.) 
Again, there is a natural embedding 
$H^2(A,\complexx) \,{\hookrightarrow}\, \OAAs$ of finite groups: to a class
in $H^2(A,\complexx)$ described by an alternating bicharacter
$\beta$, we associate a map $\phi_\beta\colon \AAs\To\AAs$ defined by
$\phi_\beta(a\,{+}\,\chi) \,{:=}\, (a \,{+}\,\beta(a,-){+}\chi(-))$. One
immediately verifies that $\phi_\beta$ is an element of \OAAs. Again
it remains to be shown, though, that this embedding is of physical 
relevance in the sense that it relates symmetries of the topological
Lagrangian to braided auto-equivalences.

\paragraph{Electric-magnetic dualities.}
The universal kinematical and dynamical symmetries cannot, however, 
exhaust the symmetries of Dijkgraaf-Witten models -- the subgroup of \OAAs\ 
generated by them is a proper subgroup. As an illustration, consider 
the case that $A$ is the cyclic group $\zet_2$. This group does not admit 
any non-trivial automorphisms, i.e.\ $\Aut(A) \eq 1$. It does not admit 
any non-trivial alternating bicharacter either, and hence the group
generated by the universal dynamical and kinematical symmetries is trivial. 
On the other hand, the group $\Aut(\AAs)$ is the symmetric group $S_3$ that 
permutes the three order-two elements of \AAs. Its subgroup \OAAs\ is the 
subgroup $S_2 \,{\cong}\, \zet_2$ of $S_3$ whose non-trivial element 
exchanges the generator of $A$ with the one of $A^*_{}$; in physics 
terminology, a transformation
of this type is called an electric-magnetic duality, or e-m duality.  
The presence of such electric-magnetic dualities is a central feature of 
gauge theories in various dimensions (see e.g.\ \cite{kaWi,kbss2} for a 
general discussion). Electric-magnetic dualities have a particularly explicit 
description in theories that can be realized as lattice models, compare
\cite{drwa,kitaA,bcka} and references therein.

\paragraph{Topological surface defects and bimodule categories.}
A proper understanding of the situation, including a physical
realization of the subgroups described above, calls for a unified field 
theoretic perspective. In this note we explain that in the present situation,
for which no rigorous definition of symmetry for an extended topological 
field theory has been fully tested out so far, topological surface 
defects provide such a perspective. In fact, the relation between symmetries 
and classes of invertible topological codimension-one defects has been 
established long ago \cite{ffrs3,ffrs5} for the case of {\em two}-dimensional 
field theories. But the mechanism that implements symmetries via topological 
defects is not restricted to the two-dimensional case. One of the virtues of 
realizing symmetries in terms of topological defects of codimension one is 
that this realization immediately determines
how the symmetries act on all kinds of aspects of the field theory, including
in particular labels of boundaries and defects.

Topological surface defects in 3d TFTs have recently attracted increasing
interest, see \cite{bomb3,kiKon,kaSau3,bamS,ekrs,khtH,gagP2,fusV} for a selection
of recent contributions. The case of three-dimensional topological
field theories of Turaev-Viro type is particularly well understood. In
particular, it is by now well-established that topological surface defects 
in Dijkgraaf-Witten theories with gauge group $A$
correspond to bimodule categories over the 
fusion category $\cala \eq \AVect$ of \findim\ $A$-graded vector spaces. 
Those defects which describe symmetries correspond to invertible
bimodule categories; accordingly we call them invertible defects. Their 
fusion product with the opposite defect is the monoidal unit for fusion,
which is also called the invisible or transparent defect. Invertible 
defects can alternatively be characterized by the fact that the only bulk 
Wilson lines that `condense' on them are the invisible bulk Wilson lines.

The group of (equivalence classes of) invertible bimodule categories, the 
so-called Brauer-Picard group of $\cala$, has been described in \cite{naNi,enoM,daNik}.
In particular, a bijection has been established \cite[Thm\,1.1]{enoM}
between invertible bimodule categories of a fusion category -- in our 
case $\AVect$ -- and braided auto-equivalences of its center -- in our
case the category $\cald(A)\Mod$ of bulk Wilson lines.
As a consequence, also (equivalence classes of) invertible bimodule 
categories are described by the group \OAAs.

\paragraph{The transmission functor.}
The results of \cite{enoM} are of purely representation theoretic
nature. The purpose of the present note is to investigate their 
consequences and counterparts in Dijkgraaf-Witten theories as gauge 
theories. The bijection between equivalence classes of invertible 
bimodule categories and braided auto-equivalences in \cite{enoM}
leads us to consider braided auto-equivalences $F_d$ of $\cald(A)\Mod$
labeled by invertible bimodule categories $d$ over $\cald(A)\Mod$.
Thus to any invertible topological surface defect $d$ we have
to associate such a braided auto-equivalence. 

Now in an extended three-dimensional topological field theory, functors 
are obtained from surfaces with boundaries, and there is indeed a natural 
candidate for the relevant two-dimensi\-o\-nal
cobordism with defect. Namely, to yield an endofunctor of the category of bulk Wilson lines,
the cobordism should have one ingoing and one outgoing boundary; 
and it should not induce any additional topological information; 
hence we have to consider a
cylinder. The cylinder can be thought of as coming from
a cut-and-paste boundary in a three-dimensional topological
field theory. Such boundaries have to intersect surface defects
transversally. Hence a surface defect results in a line embedded in
the cobordism. We are thus lead to  consider a cylinder 
$Z \eq S^1\Times [-1,1]$ with a defect line along the circle 
$D \eq S^1\Times \{0\} \,{\subset}\, Z$, as shown in the following picture:
  \eqpic{01}{200}{28} {
  \put(0,0)    {\includepic{11}
  \put(126,28) {\boldmath{$d$}}
  \put(9.5,-8) {$\scriptstyle -1$}
  \put(105,-8) {$\scriptstyle 0$}
  \put(190,-8) {$\scriptstyle 1$}
  } }
In the sequel we regard the 
circle $S^1\Times \{-1\} \,{\subset}\, Z$ as incoming and the circle 
$S^1\Times \{1\} \,{\subset}\, Z$ as outgoing. We denote the functor described
by the cobordism \erf{01} by $F_d$ and call it the \emph{transmission functor}
for the defect $d$. We will show in section \ref{sec:transmiss}
that for an invertible defect in a general three-dimensional 
extended topological field theory, the transmission functor $F_d$ is
a braided auto-equivalence of the category of bulk Wilson lines.
The transmission functor describes what happens to the type of a bulk 
Wilson line when it passes through the surface defect $d$.

We note that in some physical applications Wilson lines can be interpreted 
as world lines of quasi-particles, with the type of the quasi-particle
specified by the type of the Wilson line. When such a quasi-particle 
crosses an invertible topological surface defect of type $d$, then
the type of quasi-particle is changed according to the transmission
functor $F_d$. In field-theoretic terms, this change is brought about by 
a so-called Alice string \cite{schwa6,abcmw,deBa}. Let us
illustrate this interpretation with the situation that the surface 
defect is a half-plane $\R_{x\geq 0}\Times \R$ in three-dimensional 
space $\R^3 \,{\cong}\, \R^2\Times \R$. The boundary of the half-plane 
consists of a Wilson line that separates the surface defect $d$ from the 
transparent defect (such Wilson lines always exist). The intersection of 
the defect $d$ with a plane $\R^2\Times\{t_0\}$ of fixed time is a 
half-line labeled by $d$; this half-line constitutes the Alice string.
Since the surface defect is topological, the precise position of the 
half-line does not matter.
Whenever a quasi-particle crosses the world surface swept out by the
Alice string, i.e.\ crosses the topological surface defect, it changes
its type according to the transmission functor.

\medskip

In the case of Dijkgraaf-Witten theories, transmission functors are 
explicitly accessible: there is a gauge theoretic realization of
topological surface defects in Dijkgraaf-Witten theories based on 
relative bundles \cite{fusV2}. As a consequence, topological defects 
are classified by a subgroup $H \le A \,{\oplus}\, A$ together with a 
cohomology class in $H^2(H,\complexx)$. The formalism developed in 
\cite{mort6} then allows one to compute the transmission functor.

In this note we provide a set of generators for the group \OAAs\
of braided auto-equi\-valences,
which implies that universal kinematical and dynamical symmetries
together with electric-magnetic dualities generate all symmetries. We then 
give, for each of these generators of \OAAs, a topological defect, compute 
the resulting transmission functor and show that it acts on simple labels 
for bulk Wilson lines by the natural action of \OAAs\ on \AAs.
This provides a field theoretic realization in terms of topological surface 
defects for all braided auto-equivalences. At the same time it establishes 
that the embeddings of the subgroups of dynamical and kinematical
universal symmetries described above are indeed physical. When combined with 
the results of \cite{enoM}, it also follows that the braided equivalences
of bulk Wilson lines are in bijection with field-theoretic symmetries.\,%
 \footnote{~This is reminiscent of the situation in two-dimensional
 rational conformal field theories, where the action of topological line
 defects on bulk fields characterizes isomorphism classes of defects, so that
 the action of topological line defects on bulk fields 
 has been used in the classification of defects.}
Hereby we realize all elements of the
Brauer-Picard group as gauge-theoretic dualities.

\paragraph{Plan of the paper.}
The rest of this note is organized as follows. Section \ref{sec:2} collects some background 
about topological surface defects in Dijkgraaf-Witten theories and 
provides information about transmission functors arising
from invertible defects. In 
Section \ref{sec:3} we construct these defects explicitly for various 
classes of generators and compute their transmission functors.
Finally we show in Section \ref{sec:A} that the group of invertible 
defects is generated by kinematical and dynamical symmetries together 
with e-m dualities. Technically, this is proven as the group-theoretical 
statement that a certain set of elements of the group \OAAs\ 
generates this group.


\section{Surface defects in DW theories and the transmission functor}
\label{sec:2}

\subsection{Surface defects in Dijkgraaf-Witten theories}

A model independent analysis of 
topological surface defects between topological field theories of
Reshetikin-Turaev type has been presented in \cite{fusV}.
We summarize the pertinent aspects of that analysis:
For $\calc$ and $\calc'$ modular tensor categories,
a topological surface defect separating the Reshetikhin-Turaev
theories with bulk Wilson lines labeled by $\calc$ and by $\calc'$,
respectively, exists if and only if the modular category
$\calc \,{\boxtimes}\, (\calc')^\rev_{}$ is braided equivalent to the
Drinfeld center of some fusion category $\cala$;\,%
 \footnote{~Then the classes of $\calc$ and $\calc'$ in the Witt group
 \cite{dmno} of modular tensor categories coincide.}
here $(\calc')^\rev_{}$ is the same monoidal category as $\calc'$, but
with opposite braiding. We call the corresponding braided equivalence functor 
$\calc\,{\boxtimes}\,(\calc')^\rev_{} \,{\xrightarrow{\,\simeq\,}}\, \Z(\cala)$
a trivialization of $\calc \,{\boxtimes}\, (\calc')^\rev_{}$.
If such a trivialization exists, then the 
bicategory of defects is equivalent to the bicategory
of module categories over the fusion category $\cala$.

In the present paper we are interested in the case of defects that
separate a Dijkgraaf-Witten theory based on the
abelian group $A$ from itself. Thus the category of bulk Wilson
lines is already a Drinfeld center, $\calc \eq \calc' \eq \Z(\AVect)$,
and accordingly  there is a distinguished trivialization
  \be
  \calc \,{\boxtimes}\, (\calc')^\rev_{} \xrightarrow{\,\simeq\,}
  \,\Z(\AAVect) \,.
  \ee
The defects of our interest are thus classified by 
module categories over the category of $A\,{\oplus}\,A$-graded vector spaces.

Indecomposable \AAVect-module categories have been classified 
in \cite{ostr5}: they correspond to subgroups $H \le A\,{\oplus}\, A$,
together with a two-cocycle on $H$. 
That they describe surface defects of Dijkgraaf-Witten theories
has been explicitly demonstrated in \cite{fusV2}.


\subsection{The transmission functor}

We want to determine the transmission functor $F_d\colon \calc\To\calc$
for an invertible topological surface defect $d$ described by an 
indecomposable module category over \calc. The physical interpretation of 
the transmission functor $F_d$ for an invertible defect is as follows.
When a bulk Wilson line labeled by an object $U\iN\calc$ passes through 
the surface defect $d$, its label changes to $F_d(U)\iN \calc$.
(Recall that no bulk Wilson lines condense on the defect.)

We now explain why the transmission functor for an
\emph{invertible} surface defect in an extended three-dimensional topological 
field theory has a natural structure of a braided auto-equivalence.
First of all, by composing the transmission functor $F_d$ for a surface defect
$d$ with the transmission functor $F_{\overline d}$ for the opposite defect 
$\overline d$ and invoking fusion of defects, we conclude that 
$F_d\circ F_{\overline d} \eq \Id_\calc \eq  F_{\overline d}\circ F_d$,
so that $F_d$ is indeed an auto-equivalence.
To proceed, it will be convenient to draw the cylinder \erf{01} as an annulus
with an embedded defect line, according to
  \eqpic{03}{70}{27} {
  \put(0,-9)   {\includepic{13}
  \put(67,15.4) {\boldmath{$d$}}
  \put(46.8,38) {$\scriptstyle -1$}
  \put(68.8,38) {$\scriptstyle 0$}
  \put(94.3,38) {$\scriptstyle 1$}
  } }
To discuss monoidality we then have to compare the functors
corresponding to the two `trinion' surfaces shown in the following picture:
  \eqpic{04}{330}{36} {
  \put(0,-8)   {\includepic{14}
  \put(64,26)   {\boldmath{$d$}}
  \put(124,26)  {\boldmath{$d$}}
  }
  \put(190,-8) {\includepic{15}
  \put(124,24)  {\boldmath{$d$}}
  } }
For a general defect, the functors associated to these two trinions are 
not isomorphic and the transmission functor is not monoidal; one rather
obtains monoidal functors between categories of local modules over
braided-commutative algebras in \calc. But if the defect is
invertible, then the functors corresponding to the two trinions
are isomorphic. In fact, a natural isomorphism 
  \be
  \otimes \circ\, (F_d \Times F_d) \Longrightarrow F_d\circ\otimes
  \ee
of functors $\calc\Times\calc {\to}\, \calc$ is
furnished by the following three-manifold with corners and defects:
  \eqpic{02}{110}{65} {
  \put(0,-12)   {\Includepic{12}
  } }
Such a three-manifold with corners is to be read as a span of manifolds 
from the bottom lid to the top lid.
To show that this natural transformation provides a monoidal 
structure on the functor $F_d$,
one needs to check an identity of natural transformations. This identity
follows from the fact that the following two three-manifolds with corners 
and defects are related by a homotopy relative to the boundary:
  \Eqpic{06}{370}{83} {
  \put(-26,-15)  {\Includepic{16}
  }
  \put(214,-15)  {\Includepic{17}
  } }
(This homotopy, restricted to the surface defect, looks like the homotopy 
used in two-dimensional topological field theories to show associativity of
the algebras assigned to circles, but its role is rather different.)
In a similar manner the property that the monoidal structure on $F_d$ is
braided can be deduced from the fact that the following two three-manifolds 
with corners and defects are homotopic as well:
  \Eqpic{08}{350}{115} {
  \put(0,-21)    {\Includepic{18}
  }
  \put(210,-21)  {\Includepic{19}
  } }

\medskip

\begin{rem}
In passing, we mention another physical application: According to
\cite{kaSau3}, surface defects provide an interpretation of the so-called
TFT construction (see \cite{scfr2} for a review) of correlators of 
two-dimensional rational conformal field theories associated with the 
category $\calc$. Thereby a surface defect $d$ in particular determines a 
modular invariant torus partition function $\mathrm Z^d$ of the conformal 
field theory. For an invertible defect $d$ with transmission functor $F_d$,
the resulting torus partition function is
of automorphism type; its coefficient matrix reads
  \be
  \mathrm Z_{ij}^d = \delta_{[U_i^{}],[F_d(U_j^\vee)]} \,,
  \ee
where $\{U_i\}$ is a set of representatives of the isomorphism
classes of simple objects of $\calc$.
\end{rem}


\subsection{Transmission functors for Dijkgraaf-Witten theories}\label{sec:transmiss}

In the case of our interest the modular tensor category \calc\ is the 
representation category of the (untwisted) Drinfeld double $\cald(A)$ of a finite 
abelian group $A$, and a topological surface defect is described by a 
subgroup $H$ of $A\,{\oplus}\, A$ and a two-cocycle on $H$.
To obtain the relevant groupoids of bundles we follow
the prescription of \cite{fusV2} to find the appropriate
relative bundles: For a defect associated to the subgroup
$H \le A\,{\oplus}\, A$ with two-cocycle $\theta\iN Z^2(H,\complexx)$,
the objects of the category of relative bundles consist
of  an $A$-bundle $P_A^\pm$ on each of the two cylinders 
$Z_- \,{:}\, S^1\Times [-1,0]$ and $Z_+ \,{:=}\, S^1\Times [0,1]$,
an $H$-bundle $P_H$ on $D$ and an isomorphism
  \be
  \alpha:\quad \mathrm{Ind}_H^{A\oplus A} P_H \xrightarrow{\,\simeq\,}
  (P_A^+)\big|_D^{} \times (P_A^-)\big|_D^{}
  \labl{IndP=PP}
of $A{\oplus} A$-bundles on $D$. Using that the cylinders $Z_\pm$ are 
homotopic to the circle $D$, one can describe all bundles appearing in
\erf{IndP=PP} by bundles on a circle.
And since $\alpha$ is an isomorphism, one can work with an
equivalent groupoid in which only the $H$-bundles appear as
data. As a consequence the category of relative bundles can be
replaced by the action groupoid $H \srrad H$ for the adjoint action
of $H$ on itself. The objects of this groupoid are group elements $h\iN H$, 
which can be thought of as holonomies of the $H$-bundle on the defect 
circle with respect to some fixed base point; the morphisms of the groupoid
correspond to gauge transformations.

According to the general picture of Dijkgraaf-Witten theories \cite{mort6}, 
for the cylinder we thus get a span of groupoids. For each boundary
circle, we have the category of $A$-bundles on $S^1$, which we replace 
by the equivalent action groupoid $A \srrad A$. The relevant
functor is restriction of bundles to the boundary components.
To describe it, consider the group homomorphisms obtained
from the canonical projections $p_{1,2}$ for $A\,{\oplus}\, A$ to
its two summands,
  \be
  \pi_i: \quad H \hookrightarrow A\,{\oplus}\, A \xrightarrow{\,p_i\,} A \,.
  \ee
These give rise to functors 
  \be
  \hat\pi_i:\quad  H \srrad H \to A \srrad A
  \ee
on action groupoids, acting both on objects and morphisms like $\pi_i$.
We thus can replace the span of groupoids of categories of bundles and
relative bundles by the equivalent span 
  \be
  \xymatrix{
  & H \srrad H \ar^{\hat \pi_2}[dr]\ar_{\hat \pi_1}[dl]& \\
  A \srrad A && A \srrad A
  }
  \ee
of finite groupoids. Next we linearize, i.e.\ for each groupoid consider the
category of functors from the groupoid to $\Vect$. This gives us two pullbacks
  \be
  \hat \pi_i^*: \quad [A \srrad A,\Vect]\to [H \srrad H,\Vect] \,,
  \labl{F1}
as well as pushforwards 
  \be
  \hat \pi_{i\,*}: \quad [H \srrad H,\Vect]\to [A \srrad A,\Vect] 
  \labl{F2}
as their two-sided adjoints. Note that the category
$[A \srrad A,\Vect] \,{\cong}\, \cald(A)\text{-mod} \,{\cong}\, \calc$
is the category of labels for bulk Wilson lines of the Dijkgraaf-Witten theory.

We finally construct a functor $[H \srrad H,\Vect]\To [H \srrad H,\Vect]$
from the two-cocycle $\theta$, following \cite[Sect.\,5.4]{mort6}. To this 
end we first transgress $\theta\iN Z^2(H,\complexx)$ to $\omega_\theta
\iN Z^1(H \srrad H,\complexx)$, a one-cocycle for the loop groupoid 
$H \srrad H\cong[*\srr\zet,*\srr G]$. 
According to \cite[Thm.\,3]{willS} this is the commutator 
  \be
  \omega_\theta(h_1;h_2) = \frac{\theta(h_1,h_2)}{\theta(h_2,h_1)} \,,
  \labl{commcoc}
which is an alternating bicharacter on the abelian group $A$. (As is well 
known, alternating bicharacters for an abelian group $A$ are in bijection 
with the group cohomology $H^2(A,\complexx)$.) The groupoid algebra
$\complex[H\srrad H]$ has as a basis the morphisms 
$b_{\gamma;h}\colon \gamma\,{\xrightarrow{\,h}\,}\, \gamma$ in $H\srrad H$;
its product is composition of morphisms, wherever
this is defined, and zero else. We can canonically identify
  \be
  \complex[H\srrad H]\text{-mod} \simeq [H\srrad H,\Vect] \,.
  \ee
The two-cocycle $\omega_\theta$ gives an algebra automorphism
  \be
  \begin{array}{rll}
  \varphi_\theta:\quad
  \complex[H\srrad H] & \!\!\to \!\! & \complex[H\srrad H] \,, \\[.3em]
  b_{\gamma;h} & \!\!\mapsto\!\! & \omega_\theta(\gamma;h)\, 
  b_{\gamma;h} \,,
  \end{array}
  \ee
which in turn provides us with the desired functor
  \be
  \varphi_\theta^*:\quad
  \complex[H\srrad H]\text{-mod} \to  \complex[H\srrad H]\text{-mod} \,.
  \labl{F3}

The transmission functor $F_{H,\theta}$ is now obtained 
\cite[Sect.\,5.4]{mort6} by pre- and post-composing this functor 
with the pullback and pushforward functors obtained above:
  \be
  F_{H,\theta}:\quad
  [A\srrad A,\Vect] \xrightarrow{\,\hat\pi_1^*\,}
  [H\srrad H,\Vect] \xrightarrow{\,\varphi^*_\theta\,}
  [H\srrad H,\Vect] \xrightarrow{\,{(\hat\pi_2^{})}_*\,}
  [A\srrad A,\Vect] \,.
  \ee
In particular the transmission functor is explicitly computable.
Thus for any given invertible surface defect $(H \le A{\oplus} A, \theta)$ 
of the Dijkgraaf-Witten theory with gauge group $A$
we can find the corresponding braided equivalence $F_{H,\theta}$ explicitly.
{}From these explicit expressions, it is clear that the transmission functor 
only depends on the cohomology class of $\theta$.


\subsection{Action of the transmission functor on simple objects}

Let us determine the action of the transmission functor
on the isomorphism classes of simple objects. For the double
of a general finite group $G$ these classes  are in bijection with
pairs consisting of a conjugacy class $c$ of $G$ and an irreducible 
representation $\chi$ of the centralizer of a representative of $c$. If 
$G \eq A$ is abelian, this reduces to pairs $(a,\chi)$
consisting of a group element $a\iN A$ and an irreducible character 
$\chi\iN \As$; thus the isomorphism classes of simple objects are
  \be 
  \pi_0([A \srrad A,\Vect]) \cong \AAs .
  \ee
The group structure on $\pi_0([A \srrad A,\Vect])$ coming from the monoidal 
structure on the category $[A \srrad A,\Vect]$ coincides with the natural
group structure on \AAs.

It is straightforward to determine the action of each of the three 
functors \erf{F1}, \erf{F2} and \erf{F3} on such pairs.
First, for the pullback along $\hat\pi_1$ we find
  \be
  \begin{array}{rll}
  [\hat\pi_1^*]:\quad 
  \pi_0([A \srrad A,\Vect]) &\!\! \to \!\!& \pi_0([H \srrad H,\Vect]) \,, \Nxl1
  (a\,,\chi) &\!\! \mapsto \!\!& \displaystyle\bigoplus_{h\in p_1^{-1}(a)}
  (h\,,\,p_1^*\chi)
  \eear
  \ee
with $p_1^*$ defined by $[p_1^*\chi](h) \,{:=}\, \chi(p_1(h))$.
Second, the functor $\varphi_\theta^*$ acts as
  \be
  \begin{array}{rll}
  [\varphi_\theta^*]:\quad 
  \pi_0([H \srrad H,\Vect]) &\!\! \to \!\!& \pi_0([H \srrad H,\Vect]) \,, \Nxl1
  (h\,,\psi) &\!\! \mapsto \!\!& \displaystyle
  (h\,,\,\psi {+} \omega_\theta(h;-))
  \eear
  \ee
with $\omega_\theta$ as in \erf{commcoc}.
And third, the pushforward along $\hat \pi_2$ maps
  \be
  \begin{array}{rll}
  [(\hat\pi_2)_*]:\quad 
  \pi_0([H \srrad H,\Vect]) &\!\! \to \!\!& \pi_0([A \srrad A,\Vect]) \,, \Nxl1
  (h\,,\psi) &\!\! \mapsto \!\!& \displaystyle\bigoplus_{\chi_2^{}\in \As}
  (p_2(h)\,, \chi_2)\,\delta_{p_2^*\chi_2^{},\psi} \,. 
  \eear
  \ee


\section{Realizing the symmetries}\label{sec:3}

Various aspects of the group \OAAs\ have been described in \cite[Sect.\,5C]{daNik}. 
Here we present a set of generators of this group: As discussed in detail in
Section \ref{sec:A}, the group \OAAs\ is generated by the following elements:

\begin{enumerate}
\item
The \emph{kinematical universal symmetries}, 
which come from automorphisms of the stack of $A$-bundles. They are
given by the subgroup $\Cls \,{:=}\, 
\big\{\alpha\,{\oplus}\,\alpha^{-1\,*} \,|\, \alpha\iN\Aut(A)\big\}$,
which is isomorphic to $\Aut(A)$. 

\item
The \emph{dynamical universal symmetries}, 
which can be identified with the group of 
(equivalence classes of) 1-gerbes on the stack of $A$-bundles. 
They are given by the group of alternating bicharacters on $A$.
In the terminology of quantum field theory \cite{shWi},
the connection on a 1-gerbe is called a $B$-field. Accordingly we 
refer to the subgroup of alternating bicharacters as B-fields 
and denote it by $\Bfi$.

\item
\emph{Partial electric-magnetic} (or e-m, for short) \emph{dualities}.
Such a symmetry is given by the exchange, in \AAs, of a cyclic summand 
$C$ of $A$ with its character group $C^*_{}$. More explicitly, 
for $A$ written in the form $A \eq A'\,{\oplus}\, C$ with $C$ a 
cyclic subgroup, it acts on 
$\AAs \eq A'\,{\oplus}\, C\oplus (A')^*_{} \,{\oplus}\, C^*$ as
$\id_{A'} \,{\oplus}\, \delta \,{\oplus}\, \id_{(A')^*_{}} \,{\oplus}\,
\delta^{-1}$, with $\delta\colon C\To C^*$ any isomorphism from 
$C$ to $C^*_{}$.
 \\
If one fixes a decomposition of $A$ into a direct sum of cyclic groups
$C_i$, together with an isomorphism $\delta_i\colon C_i \To C_i^*$ 
for each cyclic summand, then the corresponding partial e-m dualities 
generate a subgroup of \OAAs, which we denote by $\Wey$.
\end{enumerate}

\noindent

For each type of generator, we will now specify the subgroup $H$ of 
$A\,{\oplus}\,A$ and
cocycle $\theta$ that label the corresponding invertible surface defect.

\begin{rem}
In principle, for any element of the group \OAAs\
the subgroup $H \le A \,{\oplus}\, A$ and the cocycle 
$\theta$ can be computed from the results in \cite[Sect.\,10.2]{enoM}. 
However, Theorem 1.1. of \cite{enoM} ensures that there is a bijection
between equivalence classes of invertible topological surface defects
and equivalence classes of braided equivalences. Hence it is sufficient
to verify that a given defect described by a pair $(H,\theta)$ reproduces
the correct braided equivalence.
\end{rem}

We will make use of the following fact
(see Corollary 3.6.3 of \cite{davy19} and Proposition 5.2 of \cite{niRie}):

\begin{cor}\label{niRie52}
The \AVect-bimodule category associated with the pair $(H,\theta)$ is 
invertible iff
  \be
  H \cdot (A \oplus \{0\}) = A \oplus A = (A \oplus \{0\}) \cdot H
  \ee
and the restriction of the commutator cocycle $\omega_\theta$ \erf{commcoc} to 
  \be
  H_\cap := \big( H \cap (A \oplus \{0\}) \big) \times
  \big( (A \oplus \{0\}) \cap H \big)
  \ee
is non-degenerate.
\end{cor}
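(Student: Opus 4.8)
The plan is to translate the invertibility of the \AVect-bimodule category into concrete conditions on the pair $(H,\theta)$ by unwinding the classification data. Recall from the discussion above that invertible defects are exactly those for which no non-trivial bulk Wilson line condenses; algebraically, the bimodule category associated with $(H,\theta)$ corresponds, under the bijection of \cite{enoM}, to a braided auto-equivalence, and invertibility is equivalent to the associated module category being \emph{simple} as a bimodule, or equivalently to the fact that the algebra object it is built from is Azumaya-like. The cited results (Corollary 3.6.3 of \cite{davy19} and Proposition 5.2 of \cite{niRie}) give general criteria for invertibility of a module category over $\AVect$ in terms of the Lagrangian/non-degeneracy structure; the task is to specialize those criteria to the abelian gauge-group setting and re-express them via the subgroup $H$ and the commutator cocycle $\omega_\theta$.

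First I would set up the relevant commutative diagram of subgroups: the two embeddings of $A$ into $A\oplus A$ as $A\oplus\{0\}$ and $\{0\}\oplus A$, and the projections $\pi_1,\pi_2$ of $H$ onto the two summands. The fusion product of bimodule categories corresponds, at the level of supports, to the set-theoretic product $H\cdot(A\oplus\{0\})$ inside $A\oplus A$; requiring that this product fill out all of $A\oplus A$ (on both sides) is precisely the statement that the support of the fused defect is the full transparent defect. So the condition $H\cdot(A\oplus\{0\}) = A\oplus A = (A\oplus\{0\})\cdot H$ should emerge as the ``surjectivity'' half of invertibility: it guarantees that the fusion of $d$ with its opposite has full support, a necessary precondition for it to be the transparent defect. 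I would verify this by computing the support of the fusion product explicitly from the relative-bundle / action-groupoid description, matching it against the transparent defect labeled by all of $A\oplus A$.

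Next I would handle the cocycle condition. Having arranged full support, the fused defect is transparent iff the accompanying cohomological data is also trivial, which is controlled by the cocycle $\theta$ through its commutator $\omega_\theta$ of \erf{commcoc}. The natural object on which to test this is the overlap $H_\cap$, the product of the two intersections $H\cap(A\oplus\{0\})$ and $(A\oplus\{0\})\cap H$, since it is exactly on these ``kernel'' directions — the part of $H$ lying in a single summand — that the cocycle can fail to trivialize upon fusion. I would show that non-degeneracy of $\omega_\theta$ restricted to $H_\cap$ is equivalent to the statement that no non-trivial bulk Wilson line condenses on $d$, i.e.\ that the fused cocycle is cohomologically trivial. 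Concretely, degeneracy of $\omega_\theta|_{H_\cap}$ would produce a non-trivial condensing line, obstructing invertibility, while non-degeneracy ensures the fusion collapses to the monoidal unit.

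The main obstacle I anticipate is the careful bookkeeping that identifies $H_\cap$ as the correct test space and shows the equivalence between \emph{non-degeneracy of the restricted commutator cocycle} and \emph{cohomological triviality of the fused cocycle}. This requires matching the transgression picture (the passage from $\theta\iN Z^2(H,\complexx)$ to the bicharacter $\omega_\theta$ via \erf{commcoc}) with the general invertibility criterion from \cite{davy19,niRie}, whose hypotheses are phrased in the language of module categories and Lagrangian algebras rather than of subgroups and bicharacters. Reconciling these two languages — verifying that the abstract non-degeneracy condition of the cited corollary specializes precisely to non-degeneracy of $\omega_\theta|_{H_\cap}$, and that the support condition matches the product condition on $H$ — is where the real content lies; the remaining steps are the routine group-theoretic and cohomological verifications that I would carry out by direct computation on the action groupoid $H\srrad H$.
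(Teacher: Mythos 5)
There is a genuine gap: your text is a plan rather than a proof --- every step carrying actual mathematical content is deferred (``I would verify\dots'', ``I would show\dots''). The two equivalences on which everything rests, namely (i) that the subgroup condition on $H$ corresponds to the correct behaviour of the fused defect $d\boti\overline d$, and (ii) that non-degeneracy of $\omega_\theta$ restricted to $H_\cap$ is equivalent to the absence of non-trivially condensing bulk Wilson lines, are precisely the content of the statement, and neither is established. A direct proof along your lines would require: identifying the opposite defect $(\overline H,\overline\theta)$ in subgroup/cocycle terms, computing the relative Deligne product of the two module categories over \AVect\ explicitly, and checking when the result is the transparent defect --- none of this bookkeeping appears, and it is not routine. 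Moreover, your guiding heuristic is off in one place: the transparent defect is labeled by the \emph{diagonal} subgroup $A_\text{diag}\le A\,{\oplus}\,A$ with trivial cocycle, not by a defect of ``full support'', and the condition $H\cdot(A\,{\oplus}\,\{0\})\eq A\,{\oplus}\,A$ is simply the statement that $H$ surjects onto a summand under the canonical projection (indeed $H\cdot(A\,{\oplus}\,\{0\}) \eq A\,{\oplus}\,p_2(H)$); it is a condition on $H$ alone, not a statement about the support of the fused defect. (Note also that $H_\cap$ as printed has two identical factors; the intended second factor, as in \cite{niRie}, is $H\cap(\{0\}\,{\oplus}\,A)$, and any proof must work with that version.)

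For comparison: the paper does not prove this statement at all --- it is imported as a corollary of Corollary 3.6.3 of \cite{davy19} and Proposition 5.2 of \cite{niRie}, where invertibility of group-theoretical bimodule categories over pointed fusion categories is characterized by the Brauer--Picard/Lagrangian-subgroup machinery. The short, honest route matching the paper is therefore purely a translation exercise: use \cite{ostr5} to identify \AVect-bimodule categories with module categories over \AAVect, i.e.\ with pairs $(H,\theta)$, and observe that the cited criteria, written in this dictionary, are word for word the two conditions of the corollary. Your proposal instead undertakes to re-derive those criteria from defect fusion and condensation; that is a legitimate and physically appealing route (it is the picture the paper's introduction advocates), but it is substantially heavier than the citation, and in its present form it contains none of the computations that would make it go through.
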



\subsection{Kinematical symmetries: group automorphisms}\label{sec:Cls}

The automorphisms in $\Cls$ are the symmetries of the stack $\mathrm{Bun}(A)$ 
and are thus symmetries of the classical configurations.

A group automorphism $\alpha\colon A\To A$ induces a group automorphism 
$\alpha^*_{}\colon \As \To \As$ acting on $\chi\iN\As$ as
  \be
  [\alpha^*_{}\chi](a) := \chi(\alpha(a))
  \ee
for all $a\iN A$. The combined group automorphism
$\widetilde\alpha \,{:=}\, \alpha \,{\oplus}\, (\alpha^{-1})^*_{\phantom*}
\colon \AAs \To \AAs$ satisfies
  \be
  \bearll
  q\big(\widetilde\alpha(a{+}\chi)\big) \!\!&
  = q\big(\alpha(a)\,{+}\,{\alpha^{-1}}^*_{}(\chi))
  \Nxl3&
  = [ {\alpha^{-1}}^*_{}(\chi)] (\alpha(a))
  = \chi( \alpha^{-1}\alpha(a)) =\chi(a) = q(a{+}\chi) \,,
  \eear
  \ee
i.e.\ preserves the quadratic form $q$ and is thus an element of \OAAs.

We claim that the surface defect whose transmission functor corresponds 
to the automorphism $\widetilde\alpha$ is the following: For the subgroup, we 
take the graph of $\alpha$, i.e.
  \be
  H_\alpha := \{ (a,\alpha(a)) \,|\, a\iN A\} \,<\,  A\oplus A \,,
  \labl{H_alpha}
and for two-cocycle on $H_\alpha$ the trivial two-cocycle $\theta_\circ$.
(We could actually take any exact two-cocycle; for the transmission functor
only the cohomology class matters.)
For instance, for $\alpha \eq \id$, $H$ is the diagonal subgroup of 
$A\,{\oplus}\,A$, which describes the invisible defect, while for the 
`charge conjugation' $a \,{\mapsto}\, a^{-1}$ it is the antidiagonal subgroup.

Let us first check that the pair $(H_\alpha,\theta_\circ)$ defines an
invertible surface defect. We have
  \be
  H_\alpha \cdot (A \oplus \{0\}) = \{ (ab\,, \alpha(a) \,|\, a,b \iN A \}
  = A \oplus A 
  \labl{H_alpha-A0}
and analogously $(A \oplus \{0\}) \cdo H_\alpha \eq A \oplus A$. Moreover,
  \be
  H_\alpha \cap (A \oplus \{0\}) = \{ 0 \} = (A \oplus \{0\}) \cap H_\alpha \,,
  \labl{H_alpha-cap}
so that trivially the restriction of $\omega_{\theta_\circ^{}}$ to 
$(H_\alpha)_\cap^{}$ is non-degenerate. Thus both conditions in Corollary 
\ref{niRie52} are satisfied, and hence the defect labeled by 
$(H_\alpha,\theta_\circ)$ is indeed invertible.

Next we compute the action of the transmission functor 
$F_{H_\alpha,\theta_\circ}$ on isomorphism classes of simple objects. 
The functor $\varphi_{\theta_\circ^{}}^*$ 
is the identity, so that the transmission functor is the composition
  \be
  (a;\chi) \,\xmapsto{~[\hat\pi_1^*]~}\, 
  \left(a,\alpha(a);\tilde \chi\right) \,\xmapsto{~[\hat\pi_{2*}^{}]~}\,
  (\alpha(a),{\alpha^{-1}}^*_{}(\chi))
  \ee
where 
$\tilde\chi \iN H^*_\alpha$ is defined by $\tilde\chi(a,\alpha(a)) \eq \chi(a)$.
Thus indeed $F_{H_\alpha,\theta_\circ}$ acts on isomorphism classes 
of simple objects by $\widetilde\alpha \iN \OAAs$.


\subsection{Dynamical symmetries: B-fields}

These symmetries come from automorphisms
of the trivial 2-gerbe on $\mathrm{Bun}(A)$. They are thus
symmetries of the classical action. 
A dynamical symmetry is described by an alternating bicharacter 
$\beta\colon A\Times A\To\complexx$. To such a bicharacter $\beta$ we 
associate the group homomorphism $\hutbeta\colon A\To A^*$ that acts as
$[\hutbeta(a)](b) \eq \beta(a,b)$ for $a,b\in A$. The automorphism 
$\widetilde\beta$ for a dynamical symmetry is then given by
  \be
  \widetilde\beta(a{+}\chi) = a + \hutbeta(a) + \chi \,. 
  \ee
This is an automorphism because $\hutbeta$ is a group homomorphism, and 
it is in \OAAs\ because $\beta$ is in addition antisymmetric:
  \be
  [\hutbeta(a)](a)=\beta(a,a)= 1
  \ee
for all $a\iN A$, which implies $\beta(a,b) \eq (\beta(b,a))^{-1}$
for $a,b\iN A$.

We claim that the surface defect whose transmission functor
reproduces $\widetilde\beta \iN \OAAs$ looks as follows:
The relevant subgroup is the diagonal subgroup 
$A_\text{diag} \le A\,{\oplus}\,A$ (independently of the
particular choice of $\beta$), and the relevant two-cocycle 
$\theta_\beta$ on $A_\text{diag} \,{\cong}\, A$ is characterized by the 
fact that its commutator cocycle $\omega_{\theta_\beta}$ is $\beta$.
(Recall that for the transmission functor only the cohomology class
of the two-cocycle matters; the alternating
bicharacters are in bijection to these classes.)

Now notice that we have $A_\text{diag} \eq H_{\alpha=\idti}$ with 
$H_\alpha$ as in \erf{H_alpha}, so that as a special case of 
\erf{H_alpha-A0} and of \erf{H_alpha-cap} we see that
  \be
  A_\text{diag} \cdot (A \oplus \{0\}) = A \oplus A
  = (A \oplus \{0\}) \cdot A_\text{diag} 
  \ee
and
  \be
  A_\text{diag} \cap (A \oplus \{0\}) = \{ 0 \} =
  (A \oplus \{0\}) \cap A_\text{diag} \,,
  \ee
respectively. Thus precisely as in Section \ref{sec:Cls} we can conclude 
that the surface defect labeled by the pair $(A_\text{diag},\theta_\beta)$
is invertible.

The action of the transmission functor $F_{\!A_\text{diag},\theta_\beta}$ 
on isomorphism classes of simple objects is obtained as follows:
  \be
  (a;\chi) \,\xmapsto{~[\hat\pi_1^*]~}\, \left(a,a;\tilde \chi\right)
  \,\xmapsto{\,[\varphi^*_{\theta_\deta^{}}]~}\, 
  \left(a,a;\tilde \chi {+} \hutbeta(a) \right)
  \,\xmapsto{~[\hat\pi_{2*}^{}]~}\, \left( a;\chi {+} \hutbeta(a) \right)
  = \widetilde\beta(a{+}\chi) \,,
  \ee
where $\tilde\chi \iN A_\text{diag}$ is defined by $\tilde\chi(a,a) \eq \chi(a)$.
Thus $F_{\!A_\text{diag},\theta_\beta}$ acts on isomorphism classes
precisely by $\widetilde\beta \iN \OAAs$.


\subsection{Partial e-m dualities}

The partial e-m dualities appear as symmetries of quantized Dijkgraaf-Witten 
theories.
Every partial e-m duality can be obtained in the following manner.
Suppose that $A$ is written as a direct sum 
$A \,{\cong}\, A' \,{\oplus}\, C$ with $C$ a cyclic subgroup (allowing
for the case that $A'$ is the trivial subgroup). This induces a similar
decomposition of the character group $\As$: denoting by $C^*_{}$ the 
subgroup of $\As$ of characters that vanish on $A'$, and by $(A')^*_{}$ 
the subgroup of characters vanishing on $C$, we have 
$\As \,{\cong}\, (A')^*_{} \,{\oplus}\; C^*_{}$. 

As abstract groups, $C$ and $C^*$ are isomorphic. Fix an
isomorphism $\deta_C\colon C\To C^*_{}$ and define the automorphism
$\deta$ of the group $\AAs \,{\cong}\, A'\,{\oplus}\, C \,{\oplus}\, 
(A')^*_{} \,{\oplus}\, C^*_{}$ as follows: $\deta$ is the identity on 
the summands $A'$ and $(A')^*_{}$, while on $C \,{\oplus}\, C^*_{}$ it 
acts as
  \be
  (\,c\,,\psi\,) \longmapsto
  \big(\, \deta_C^{-1}(\psi) \,,\deta_C(c) \,\big) \,.
  \ee
That $\deta$ preserves the quadratic form is seen by calculating
  \be
  \bearll
  q\big( \deta(a'{+}c{+}\chi'{+}\psi) \big) \!\!\!&
  = q\big( a'+ \deta_C^{-1}(\psi) + \chi' + \deta_C(c) \big) \Nxl2&
  = \chi'(a')\cdot [\deta_C(c)] (\deta_C^{-1}(\psi))
  = \chi'(a')\cdot \psi(c) = q(a'{+}c{+}\chi'{+}\psi) \,.
  \eear
  \ee

We claim that the surface defect whose transmission functor corresponds 
to $\delta\iN\OAAs$ is as follows: The relevant subgroup 
of $A\,{\oplus}\,A$ is the group 
$H_\deta \,{:=}\, A'_\text{diag}\,{\oplus}\,C\,{\oplus}\, C$, 
where $A'_\text{diag}$ is embedded diagonally into the summand 
$A'\,{\oplus}\, A'$ of $A\,{\oplus}\, A$, while the cocycle 
$\theta_\deta$ on $H_\deta $ is characterized
by its commutator cocycle, which is defined to act as
  \be
  \omega^{}_{\theta_\deta^{}} ((a',c_1,c_2),(\tilde a',\tilde c_1,\tilde c_2))
  := [\deta_C(c_1)](\tilde c_2)
  \cdot \left( [\deta_C(c_2)](\tilde c_1) \right)^{-1}
  \ee
(this is obviously an alternating bicharacter on $H_\deta$). For determining 
the transmission functor, it again suffices to know this bicharacter.

To verify invertibility, note that
  \be
  A'_\text{diag} \cdot (A' \oplus \{0\})
  = \{ (a'b',b') \,|\, a',b' \iN A' \} = A' \oplus A'
  \ee
and analogously 
$(A' \,{\oplus}\, \{0\}) \cdo A'_\text{diag} \eq A' \,{\oplus}\, A'$,
which implies that $(A \,{\oplus}\, \{0\}) \cdo H_\delta \eq A \,{\oplus}\, A
= H_\delta\cdot (A' \oplus \{0\})$. Moreover, we have
  \be
  A'_\text{diag} \cap (A' \oplus \{0\}) = \{0\}
  = (A' \oplus \{0\}) \cap A'_\text{diag} \,,
  \ee
which implies that $(H_\deta)_\cap^{} \eq (C\,{\oplus}\,C)\Times
(C\,{\oplus}\,C)$. To see that $\omega_{\theta_\deta^{}}$ restricted to 
$(H_\deta)_\cap^{}$ is non-de\-ge\-nerate, we fix a generator
$a$ of $C$ and denote by $\psi\iN C^*_{}$ the character with value
$\psi(a) \eq \eE^{2\pi\ii /N}$, with $N \eq |C|$. Then 
$\delta_C(a) \eq \psi^l$ with $l$ such that $(l,N) \eq 1$, and we find
  \be
  \omega^{}_{\theta_\deta^{}}(a^{p_1};a^{q_1},a^{p_2},a^{q_2})
  = \eE^{2\pi\ii l(p_1q_2-q_1p_2)/N} .
  \ee
Now $\eE^{2\pi\ii l/N}$ is a primitive $N$-th root of unity, so that
for any pair $(p_1,q_1)$ we can find $(p_2,q_2)\iN\zet 
   \linebreak[0]
{\times}\,\zet$
such that $p_1q_2 \,{-}\, q_1p_2 \,{\ne}\, 0\bmod N$. Hence 
$\omega_{\theta_\deta^{}}$ is non-degenerate. We can thus again invoke 
Corollary \ref{niRie52} to conclude that the defect labeled by 
$(H_\deta,\theta_\deta)$ is invertible.

To compute the action of the transmission functor on simple
objects, we note that the problem splits into a part involving
only the subgroup $A'$ and another part involving only
the cyclic group $C$. The first problem reduces to the
computation of the transmission functor for the
defect associated with the identity automorphism, which was treated
in section \ref{sec:Cls}. Thus we can restrict ourselves to the case
that $A \eq C$ is cyclic.
In this case the action of the pullback functor on the simple object
$(c,\chi)$ with $b\iN C$ and $\chi\iN C^*_{}$ reads
  \be
  (c;\chi) \,\xmapsto{~[\hat\pi_1^*]~}\, 
  \bigoplus_{\tilde c \in C} \big(c,\tilde c;\chi^{[1]}\big) 
  \labl{sumexpression}
with the character $\chi^{[1]}\iN (C\oplus C)^*_{}$ taking
the values $\chi^{[1]}(d,\tilde d) \eq\chi(d)$ for $d,\tilde d\iN C$.
Next we note that the functor $\varphi^*_{\theta_\deta^{}}$ acts
on simple objects of $\cald(C{\oplus}C)\Mod$ as
  \be
  (c,\tilde c,\chi^{[1]}) \,\xmapsto{\,[\varphi^*_{\theta_\deta^{}}]~}\,
  (c,\tilde c,\chi^{[2]})
  \ee
with the character $\chi^{[2]}\iN (C\oplus C)^*_{}$ taking the values 
  $\chi^{[2]}(d,\tilde d)
  \eq \chi(d)\, {[\delta_C(c)](\tilde d)} / {[\delta_C(\tilde c)](d)}$
  for $d,\tilde d\iN C$.
This is, in turn, mapped by the pushforward functor $[\hat\pi_{2*}^{}]$
to those characters $\chi^{[3]}\iN C^*_{}$ for which
$p_2^*\chi^{[3]} \eq \chi^{[2]}$. This condition amounts to the identity
  \be
  \chi^{[3]}(\tilde d) = \chi^{[2]}(d,\tilde d)
  = \chi(d)\, \frac{[\delta_C(c)](\tilde d)} {[\delta_C(\tilde c)](d)}
  \ee
for all $d,\tilde d\iN C$. Considering the dependence of both sides of
this equality on $\tilde d$ determines $\chi^{[3]} \eq\delta_C(c)$, while
the fact that the dependence on $d$ on the right hand side must be trivial 
shows that we need $\chi(d) \eq [\delta_{C}(\tilde c)](d)$ for all $d\iN C$.
This means that in the summation over $\tilde c$ in \erf{sumexpression} only
the term with $\delta_C(\tilde c) \eq \chi$ survives
the pushforward. We conclude that the composition of the three
functors maps the simple object $(c,\chi)$ to a single simple object,
as befits an equivalence. Concretely,
  \be
  (c,\chi) \,\longmapsto\, (\delta_C^{-1}(\chi),\delta_C(c)) \,,
  \ee
and thus the defect realizes an e-m duality.


\section{Generators of \boldmath{\OAAs}}\label{sec:A}

It remains to be shown that the three types of group elements discussed in
the preceding section -- corresponding to kinematical and dynamical classical
symmetries and to partial e-m dualities -- indeed constitute a set of 
generators for the Brauer-Picard group \OAAs.
To this end we have to show that an arbitrary element of \OAAs\ can be 
expressed as a product of elements in a suitable explicitly specified set of 
generators. This description turns out to be similar to the description of 
symplectic or orthogonal groups over the integers (see e.g.\ \cite{huRei,shWi})
and the proof involves a variant of the Euclidean algorithm similar as in
the proof of Bruhat decompositions (see e.g.\ \cite{reede2}). 
Technical complications arise from the 
need to respect the divisibility properties of the orders of the generators.

We start by introducing pertinent notation.  Any finite abelian group $A$ 
can be presented as $A \eq \bigoplus_{p} A^{(p)}$ with the sum being over 
all primes and $A^{(p)}$ a direct sum of cyclic groups of order a power of $p$.
To analyze the group $A$ we present it in terms of some arbitrary, but fixed, 
ordered family $( a_i \mid i \eq 1,2,...\,,r )$ of generators
such that (writing the group product additively)
  \be
  A^{(p)}
  = \bigoplus_{i=1}^r\, \langle a_i \mid {p^{\ell_i}_{}} a_i \,{=}\, 0 \rangle
  = \bigoplus_{i=1}^r\, \zet_{p^{\ell_i}_{}}^{}
  = \bigoplus_s \big( \zet_{p^{s}_{}}^{} \big)^{\!\oplus n_s}_{}
  \labl{A=oplusZi}
with non-negative integers $n_s$, $r \eq \sum_s n_s$ and $\ell_i$. It will 
be convenient to order the generators such that the powers of $p$ appear in 
ascending order, i.e.\ $\ell_i \le \ell_j$ for $i\,{<}\,j$.
It is easy to see that
  \be
  \Aut(A) = \!\times_{p\;\rm prime} \Aut(A^{(p)})
  \quad \text{~as well as~} \quad
  \OAAs = \!\times_{p\;\rm prime} \OAAsp \,.
  \ee
As a consequence we can, and will, restrict our attention to a single 
prime $p$. By a slight abuse of notation, in the sequel we will just write 
$A$ for $A^{(p)}$.

In terms of the generators, a general group element is a linear combination 
$\sum_{i=1}^r \bar \gamma_i\,a_i$ with 
$\bar \gamma_i \iN \zet \,{\bmod}\, p^{\ell_i}_{}$. In the sequel we freely
replace such classes $\bar\gamma_i$ by representatives $\gamma_i \iN \zet$;
also, we denote by $\gamma_i^{-1} \iN \zet$ a representative of the inverse 
of $\gamma_i$ modulo $p^{\ell_i}_{}$. For the character group $A^*_{}$ we 
choose generators $x_i$ in such a way that $x_i(a_i)$ is a primitive
$p^{\ell_i}$th root of unity while $x_i(a_j) \eq 1$ for $i \,{\ne}\, j$, 
so that the quadratic form $q$ is given by
  \be
  q\big( \mbox{$\sum_{i=1}^r$} (\gamma_i\,a_i\,{+}\,\zeta_i\,x_i) \big)
  = \exp\big( 2\pi\ii \sum_{i=1}^r p^{-\ell_i}_{}\, \gamma_i\,\zeta_i \big) \,,
  \ee
and in particular $\ord(x_i) \eq \ord(a_i)$.
With these conventions, an element $g$ of \OAAs\ (or, for that matter,
of $\End(\AAs)$) is determined by the expressions
  \be
  g(a_i) = \sum_{j=1}^r \big( \alpha_{i,j}^g\,a_j + \xi_{i,j}^g\,x_j \big) \qquand
  g(x_i) = \sum_{j=1}^r \big( \beta_{i,j}^g\,a_j + \eta_{i,j}^g\,x_j \big) 
  \labl{g:alpha-eta}
for $i \eq 1,2,...\,,r$,
with suitable constraints on the coefficients $\alpha_{i,j}^g,\,\xi_{i,j}^g,\,
\beta_{i,j}^g,\,\eta_{i,j}^g \iN \zet$ which, however, we do not need to spell out.

We introduce three subgroups of \AAAs:
  \be
  \bearl
  \Cls := \big\{ \alpha\,{\oplus}\,\alpha^{-1\,*} \,|\, \alpha\iN\Aut(A) \big\}
  \, \cong \Aut(A) \,,
  \Nxl3
  \Bfi := \big\langle b_{i,j} \mid 1 \le i \,{<}\, j \le r \big\rangle 
  \qquand
  \Wey := \bigoplus_{i=1}^r\! D_i  \, \cong \zet_2^{\,\oplus r} .
  \eear
  \labl{Cls-Wey-Bfi}
Here $D_i \,{\cong}\,\zet_2$ is generated by the automorphism $d_i$ that 
exchanges $a_i$ and $x_i$ and leaves all other generators fixed, while $b_{i,j}$
is given by
  \be
  b_{i,j}: \quad \left\{ \bearl a_i \mapsto a_i + x_j \,, \Nxl1
  a_j \mapsto a_j - x_i \,, \Nxl1
  a_k \mapsto a_k \quad \text{for} ~ k\not\in\{i,j\} \,, \Nxl1
  x_k \mapsto x_k \,.
  \eear \right.
  \labl{b_ij}
It is not hard to check that the groups \erf{Cls-Wey-Bfi} are actually
subgroups of $\OAAs \,{<}\, \AAAs$. 
The groups \erf{Cls-Wey-Bfi} describe kinematical universal symmetries
($\Cls$), dynamical symmetries or B-fields ($\Bfi$), and partial 
e-m-dualities associated to the direct sum decomposition \erf{A=oplusZi} 
of $A$ ($\Wey$), respectively.

We will also be interested in two particular types of elements of $\Cls$:
for $i \,{\ne}\,j$ satisfying $\ord(a_i) \eq \ord(a_j)$ we set
  \be
  t_{i,j}: \quad \left\{ \bearl a_j \mapsto a_j - a_i \,, \Nxl1
  a_k \mapsto a_k \quad \text{for} ~ k\ne j \,, \Nxl1
  x_i \mapsto x_i + x_j \,, \Nxl1
  x_k \mapsto x_k \quad \text{for} ~ k\ne i \,, 
  \eear \right.
  \ee
and for $\gamma \,{\ne}\, 0 \bmod p$ and any $j$
  \be
  \ome j\gamma: \quad \left\{ \bearl a_j \mapsto \gamma^{-1}\, a_j \,, \Nxl1
  a_k \mapsto a_k \quad \text{for} ~ k\ne j \,, \Nxl1
  x_j \mapsto \gamma\, x_j  \,, \Nxl1
  x_k \mapsto x_k \quad \text{for} ~ k\ne j \,.
  \eear \right.
  \ee

We further introduce a separate notation for those elements of $\Cls$ that 
act as a transposition on pairs of generators of some fixed order and leave 
all other generators fixed, according to
  \be
  \pii_{i,j}: \quad \left\{ \bearl a_i \leftrightarrow a_j \,, \Nxl1
  a_k \mapsto a_k \quad \text{for} ~ k\not\in\{i,j\} \,, \Nxl1
  x_i \leftrightarrow x_j \,, \Nxl1
  x_k \mapsto x_k \quad \text{for} ~ k\not\in\{i,j\} 
  \eear \right.
  \labl{pii}
with $\ord(a_j) \eq \ord(a_i)$. These generate a subgroup 
$\mathfrak S \eq \bigoplus_s \mathfrak S_{n_s} \le \Cls$ consisting of 
elements that permute pairs $(a_i,x_i)$ of generators of the same order.
Below, for convenience we allow for $i\eq j$ in \erf{pii},
i.e.\ for any $i$, $\pii_{i,i}$ is just the unit element of \OAAs.

\medskip

We now establish the following

\begin{fact}
\OAAs\ is generated by the subgroups \erf{Cls-Wey-Bfi}.
\end{fact}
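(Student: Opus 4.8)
The plan is to prove that an arbitrary $g\iN\OAAs$ can be reduced to the identity by successive multiplication with generators drawn from the subgroups $\Cls$, $\Bfi$, $\Wey$ in \erf{Cls-Wey-Bfi}, via a Gaussian/Euclidean elimination procedure adapted to the quadratic form $q$. Writing $g$ in terms of the matrix blocks $\alpha^g_{i,j}, \xi^g_{i,j}, \beta^g_{i,j}, \eta^g_{i,j}$ as in \erf{g:alpha-eta}, I would think of $g$ as a $2r\,{\times}\,2r$ matrix over the residue rings $\zet/p^{\ell_i}$ that is required to preserve $q$; the constraints imposed by orthogonality (which the excerpt does not spell out explicitly) are exactly what makes the reduction terminate. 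The key observation is that the generators act as the elementary column/row operations one needs: the elements $t_{i,j}$, $\ome j\gamma$ and the transpositions $\pii_{i,j}$ inside $\Cls$ supply the integral elementary operations among generators of equal order, the $b_{i,j}\iN\Bfi$ add $x$-components to $a$-images, and the $d_i\iN\Wey$ swap an $a$-generator with its dual $x$-generator.

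The first step is to use the e-m dualities $d_i$ together with the $\Cls$-operations to arrange that $g(a_1)$ has a nonzero, and in fact unit, $a_1$-coefficient modulo $p$. Concretely, if $g(a_1)$ has all its $a$-coefficients divisible by $p$, the orthogonality relation $q(g(a_1))\eq q(a_1)\eq1$ forces a nonzero $x$-coefficient of matching valuation; applying a suitable $d_i$ then moves a unit into an $a$-slot. Having secured a pivot, I would clear the remaining entries of the first row and first column: the operations $\ome 1\gamma$ normalize the pivot to $1$, the $t_{1,j}$ and $\pii$'s subtract off the other $a$-coefficients of $g(a_1)$, and the $b_{1,j}$ kill the $x$-coefficients. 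The delicate point is that all these operations must respect divisibility — one may only subtract $a_i$ from $a_j$ when $\ord(a_i)\le\ord(a_j)$ — so the ascending ordering $\ell_i\le\ell_j$ fixed in \erf{A=oplusZi} is exactly what guarantees the Euclidean reduction on each $p$-component is legitimate.

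Once $g(a_1)\eq a_1$ is achieved, preservation of $q$ automatically pins down the dual behaviour: since $q$ pairs $a_1$ nondegenerately only with $x_1$, one deduces $g(x_1)\eq x_1$ up to corrections supported on the complementary generators, which are again removed by further $b_{1,j}$ and $\Cls$-operations. This splits off the first hyperbolic summand $\langle a_1,x_1\rangle$ and reduces $g$ to an element of the orthogonal group of the remaining generators $a_2,\dots,a_r$. I would then complete the argument by induction on $r$, the base case $r\eq1$ being a short direct check that $\Oq(\zet_{p^\ell}\oplus\zet_{p^\ell}^*)$ is generated by $\ome 1\gamma$ and $d_1$ alone.

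The main obstacle I anticipate is not the combinatorial bookkeeping of the elimination but ensuring the pivot can always be produced as a \emph{unit} while respecting orders: when several cyclic summands share the same order there is freedom (handled by $\mathfrak S\le\Cls$), but when orders differ one must verify that the isometry condition prevents the pathological case in which every candidate pivot entry of $g(a_1)$ is a non-unit of strictly larger valuation than $\ord(a_1)$. Controlling this requires reading off, from $q(g(a_1))\eq q(a_1)$, a precise lower bound on the valuation of at least one coefficient — essentially the statement that an isometry cannot increase the order of a generator — and it is here that the divisibility-respecting structure of the generators must be matched exactly against the constraints on the $\alpha,\xi,\beta,\eta$ that orthogonality imposes.
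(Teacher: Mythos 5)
Your overall strategy --- elimination by the generators, with divisibility respected --- is the same kind of argument as the paper's, and most of your individual moves (normalizing a pivot with $\ome{1}{\gamma}$ and a transposition $\pii_{j,1}$, clearing $x$-coefficients with powers of the $b_{1,j}$, letting $q$-preservation kill the diagonal entry, splitting off the hyperbolic summand $\langle a_1,x_1\rangle$ and inducting on $r$) have correct counterparts in the paper's Steps 1b--1d and $r{+}1$. But there is a genuine gap at exactly the step you yourself single out as the main obstacle: producing a \emph{unit} pivot in $g(a_1)$. The mechanism you propose cannot supply it. The relation $q(g(a_1)) \eq q(a_1) \eq 1$ carries no information of the required kind, since $q$ is identically $1$ on the whole subgroup $A\,{\oplus}\,\{0\}$; in particular it does not exclude $g(a_1) \eq p\,a_2$ with $\ord(a_2) \eq p\cdot\ord(a_1)$, where every coefficient is a non-unit. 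Nor is the missing ingredient the statement that ``an isometry cannot increase the order of a generator'': any group automorphism preserves orders of elements, and order preservation alone still permits all coefficients of $g(a_1)$ to be non-units, precisely \emph{because} you start at the generator of smallest order $p^{\ell_1}$ --- a coefficient $p^{\ell_j-\ell_1}u$ (with $u$ a unit) on a higher-order generator $a_j$ also yields a component of order $p^{\ell_1}$.

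The gap can be closed in two ways. (i) Use the automorphism property rather than the form: $a_1 \,{\notin}\, p\,(\AAs)$, hence $g(a_1) \,{\notin}\, p\,(\AAs)$, so some coefficient of $g(a_1)$ is a unit mod $p$; order preservation then forces that unit coefficient to sit on a generator of order exactly $p^{\ell_1}$ (a unit times a higher-order generator would give $\ord(g(a_1)) \,{>}\, p^{\ell_1}$), after which your $d_i$, $\pii_{j,1}$, $\ome{1}{\gamma}$ manipulations apply. (ii) Run the elimination from the other end, as the paper does, starting with the generator $x_r$ of \emph{maximal} order: there order preservation alone produces the pivot (the paper's Step 1a), because some component of $g(x_r)$ must have order $p^{\ell_r}$, and since no generator has larger order this forces a unit coefficient on a generator of order exactly $p^{\ell_r}$. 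This is the reason the paper orders the $\ell_i$ ascending and eliminates $x_r, x_{r-1},\dots$ downward (with the $t_{r-1,r}$ amendment of its Step 2 protecting what has already been fixed), finishing in Step $r{+}1$ by identifying the residual element as a product of an element of $\Cls$ with a B-field, instead of inducting on hyperbolic summands as you do. Either repair makes your argument work; as written, however, the crucial step is missing and the route you sketch for it would fail.
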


\begin{proof}~
\nxl1
\underline{Step 1}\,:
Given $g\iN\OAAs$ we show that multiplying $g$ with suitable
elements of the subgroups \erf{Cls-Wey-Bfi} yields a group element 
that leaves the last generator $x_r$ invariant.
\nxl2
\underline{Step 1a}\,:
Describe $g$ as in \erf{g:alpha-eta}. We first consider the case that 
$\ord(\eta_{r,i}^g\,x_i) \,{<}\, \ord(x_r)$ for all $i$.
Then, since $g$ must preserve the order of $x_r$, there exists at least one 
value of $i$ such that $\ord(\beta_{r,i}^g\,a_i) \eq \ord(x_r)$.
Take one such value (say, the largest one satisfying the equality) and 
denote it by $k(r)$ or, for brevity, just by $k$. Then the group element 
$g' \,{:=}\, d_k \cir g$ acts as
  \be
  g'(x_r) \equiv d_k(g(x_r)) = g(x_r)
  + (\beta_{r,k}^g \,{-}\, \eta_{r,k}^g)\, (x_r \,{-}\,a_r) \,,
  \ee
so that in particular 
$\ord(\eta_{r,k}^{g'}\,x_k) \eq \ord(\beta_{r,k}^{g'}\,x_k) \eq \ord(x_r)$.
\nxl2
\underline{Step 1b}\,:
By step 1a we can assume that $g$ satisfies 
$\ord(\eta_{r,k}^g\,x_k) \eq \ord(x_r)$, i.e.\ $\ord(x_k) \eq \ord(x_r)$
and $\eta_{r,k}^g \,{\ne}\, 0 \bmod p$. It follows that 
$\pii_{k,r} \iN \mathfrak S \le \Cls$ and that there exists a $\gamma \iN \zet$ 
such that $\gamma\,\eta_{r,k}^g \eq 1 \bmod p$.
Then the group element $g' \,{:=}\, \ome r\gamma \cir \pii_{r,k} \cir g$ 
acts as in \erf{g:alpha-eta} with $\eta_{r,r}^{g'} \eq 1$.
\nxl2
\underline{Step 1c}\,:
Invoking step 1b we assume from now on that $g$ satisfies $\eta_{r,r}^g \eq 1$.
Further, for $i \,{\ne}\, r$ the element 
$g' \,{:=}\, {(b_{i,r})}_{}^{\beta_{r,i}^g}$ satisfies 
$\beta_{r,i}^{g'} \eq \beta_{r,i}^g \,{-}\, \beta_{r,i}^g \eq 0$. Hence by 
composing $g$ successively, for all $i\eq 1,2,...\,,r{-}1$, with
the group element $b_{i,r}$ raised to the power $\beta_{r,i}^g$ one obtains
a group element $\tilde g$ satisfying
  \be
  \tilde g(x_r) = \beta_{r,r}^g\, a_r + x_r + \sum_{i=1}^{r-1}\xi_{r,i}^g\,x_r \,.
  \ee
Now by construction, $\tilde g \iN \OAAs$, while on the other hand
$q(\tilde g(x_r)) \eq \exp(2\pi\ii\,p^{-\ell_r}_{} \beta_{r,r}^g)$. Thus 
in fact we must have $\beta_{r,r}^g \eq 0\bmod p^{\ell_r}$.
\nxl2
\underline{Step 1d}\,:
By step 1c we can assume that $g$ satisfies $\beta_{r,i}^g \eq 0$
for all $i \eq 1,2,,...\,,r$. Further, for $i \,{\ne}\, r$ the element 
$g' \,{:=}\, {(t_{r,i})}_{}^{\eta_{r,i}^g}$ satisfies 
$\eta_{r,i}^{g'} \eq \eta_{r,i}^g \,{-}\, \eta_{r,i}^g \eq 0$. Hence by 
composing $g$ successively, for all $i\eq 1,2,...\,,r{-}1$, with
the group element $t_{r,i}$ raised to the power $\eta_{r,i}^g$ one obtains
a group element $\tilde g$ satisfying $\tilde g(x_r) \eq x_r$.
\nxl3
\underline{Step 2}\,:
By step 1 we can assume that $g(x_r) \eq x_r$. Now consider the image
$g(x_{r-1})$ of the group element $x_{r-1}$. We manipulate it in full analogy 
with what we did with $g(x_r)$ in step 1, just replacing
$r \,{\mapsto}\, r{-}1$ everywhere, but with the following amendment: In case
that in the analogue of step 1a the label $k \eq k(r{-}1)$ should turn out to
take the value $r$, before proceeding to replacing 
$g \,{\mapsto}\, d_k \cir g$ we consider instead of $g$ the group element 
  \be
  g' := t_{r-1,r} \circ g \,.
  \labl{g'=tg}
After this replacement we can assume that $k \le r{-}1$. As a
consequence, afterwards one never will have to compose with
elements from \erf{Cls-Wey-Bfi} of the form $\ome r\gamma$ or 
$b_{r,j} \cir d_r$ which would potentially alter the input relation
$g(x_r) \eq x_r$. Thus by further proceeding along the lines of step 1
we end up with a group element $\tilde g$ satisfying both
$\tilde g(x_r) \eq x_r$ and $\tilde g(x_{r-1}) \eq x_{r-1}$.
\nxl3
\underline{Steps $3,\,4,...\,,r$}\,:
Proceed iteratively for $g(x_{r-j})$ for $j \eq 2,3,...\,,r{-}1$, where in 
the $j$th iteration the role of $t_{r-1,r}$ in \erf{g'=tg} is taken over
by $t_{r-j,r-l}$ for suitable $l \,{<}\, j$.
\\
The result is a group element $\tilde g$ satisfying $\tilde g(x_i) \eq x_i$
for all $i \eq 1,2,...\,,r$.
\nxl3
\underline{Step $r{+}1$}\,:
By step $r$ we can assume that $g(x_i) \eq x_i$ for all $i \eq 1,2,...\,,r$.
We show that this in fact implies that 
$g(a_i) \eq a_i + \sum_{j=1}^r \xi_{i,j}^g\,x_j$ for all $i \eq 1,2,...\,,r$.
Indeed, from \cite{hiRh} we know that in order for $g$ to belong to \AAAs,
the matrix $M(g) \eq \left( \begin{array}{cc}\!\!
\alpha^g \!&\! \xi^g \\[-2pt] \beta^g \!&\! \eta^g \!\!\eear\right)$ with block 
matrices $\alpha^g,\,\xi^g,\,\beta^g,\,\eta^g$ consisting of the coefficients 
in \erf{g:alpha-eta}, must satisfy $\det(M(g) \bmod p) \,{\ne}\,0$. 
\\
Now for $g$ of the form considered here we have $\eta^g \eq \onemat_{r\times r}$ 
and $\beta^g_{} \eq 0$; this implies in particular that
$0 \,{\ne}\, \det(M(g) \bmod p) \eq \det(\alpha^g \bmod p)$, and thus that
$\alpha^g \iN \Aut(A)$. As a consequence, together with $g$ also the product
$g' \,{:=}\, g \cir \big( (\alpha^g)^{-1}_{} {\oplus} (\alpha^g)^*_{}\big)$ 
is an element of \OAAs. On the other hand, we have explicitly
  \be
  g'(a_i) = a_i + \sum_j \xi^{g'}_{i,j}\,x_j  \qquand
  g'(x_i) = \sum_j \eta^{g'}_{i,j}\,x_j \,.
  \ee
Hence the fact that $g'$ belongs to \OAAs\ amounts in particular to the following 
restrictions, which together are also sufficient:
  \be
  \begin{array}{lll}
  \multicolumn3l{ q(g'(a_i)) = q(a_i) ~~\Longrightarrow~~ \xi^{g'}_{i,i} = 0 \,, }
  \Nxl2
  q(g'(a_i{+}a_j)) = q(a_i{+}a_j) & \Longrightarrow &
  \xi^{g'}_{i,j} + \xi^{g'}_{j,i} = 0 ~~\text{for}~~ i \,{\ne}\, j \,,
  \Nxl2
  q(g'(a_i{+}x_i)) = q(a_i{+}x_i) & \Longrightarrow & \eta^{g'}_{i,i} = 1 \,,
  \Nxl2
  q(g'(a_j{+}x_i)) = q(a_j{+}x_i) & \Longrightarrow & 
  \eta^{g'}_{i,j} = 0 ~~\text{for}~~ i \,{\ne}\, j \,.
  \eear
  \ee
Together, these restrictions just say that $g' \iN \Bfi$.
\nxl1
This concludes the proof.
\end{proof}

\bigskip

In the following example we illustrate how the result follows from an explicit
analysis in a particularly simple case.

\begin{Example} $A \eq \zet_p$.
\nxl1
It is not hard to see that in order for \erf{g:alpha-eta} to be in \OAAs, it 
is necessary and sufficient that the numbers $\alpha,\,\beta,\,\xi,\,\eta$ satisfy
$\alpha\xi \eq 0 \eq \beta\eta$ and $\alpha\eta\,{+}\,\beta\xi \eq 1$ modulo $p$.
These constraints are solved by 
  \be
  \xi = 0 = \beta\,,~~ \eta = \alpha^{-1} \qquad\text{and by}\qquad
  \alpha = 0 = \eta\,,~~ \beta = \xi^{-1} .
  \ee
Among the solutions of the second type is in particular the case 
$\xi \eq \beta \eq 1$, which gives the (unique) e-m duality, while all other
solutions of this type are obtained from one of the first type by composing
with the e-m duality. In short, we have
  \be
  \Oq(\zet_p^{}{\oplus}\zet_p^*) = \Cls \rtimes \Wey 
  \ee
with
  \be
  \Cls = \Aut(\zet_p) = GL_1(\Fp) = \Fpx \cong \zet_{p-1} \qquand
  \Wey = \zet_2 \,.
  \ee
In particular, $|\Oq(\zet_p^{}{\oplus}\zet_p^*)| \eq 2(p{-}1)$.

The transmission functor for the automorphism given by $\alpha \,{\in}\, \Fpx$ is
  \be
  \begin{array}{rll}
  \zet_p \oplus \zet_p^* & \!\!\to \!\! & \zet_p \oplus \zet_p^* \,, \\[.3em]
  (a,\chi) & \!\!\mapsto\!\! & (\alpha a, \chi(\alpha^{-1} \cdot))\,
  \end{array}
  \ee
\noindent
and the transmission functor corresponding to the single non-trivial 
e-m duality is given by
  \be
  \begin{array}{rll}
  \zet_p \oplus \zet_p^* & \!\!\to \!\! & \zet_p \oplus \zet_p^* \,, \\[.3em]
  (a,\chi) & \!\!\mapsto\!\! & (\delta(\chi)^{-1}, \delta(a))\, .
  \end{array}
  \ee
\end{Example}

\begin{Example} $A \eq \zet_p^{\,2}$.
\nxl1
As another example consider the non-cyclic group $A=\zet_p\,{\oplus}\,\zet_p$.
In this case the second co\-ho\-mo\-logy group $H^2(A,\complex^\times)$ is
non-trivial. A choice $(e_1,e_2)$ of generators of $A$ allows us to describe 
the subgroups generating the symmetry group $\Oq(\zet_p^{2}\,{\oplus}\,\zet_p^{*2})$
explicitly: 
  \be
  \Cls = \Aut(\zet_p^2) \cong GL_2(\Fp) \,, \qquad
  \Wey \cong \zet_2^2 \,, \qquad
  \Bfi \cong \zet_p \,.
  \ee
Transmission functors for the kinematical symmetries and e-m dualities are similar 
to the previous example. An alternating bicharacter $\beta$ can be described by
  \be
  \beta_\ell(e_i,e_i)=1 \qquand
  \beta_\ell(e_1,e_2)=\beta_\ell(e_2,e_1)^{-1}=\eE^{2\pi\ii\ell/p }
  \ee
with $\ell \iN \{ 0,1,...\,, p{-}1 \}$. This yields a map
$b_\ell\colon \zet_p^{} \,{\to}\, \zet_p^*$ with $b(a)(b) \eq \exp(2\pi\ii \ell ab/p)$.
We then have the following transmission functor:
  \be
  \begin{array}{rll}
  \zet_p^2 \oplus \zet_p^{*2} & \!\!\to \!\! & \zet_p^{2} \oplus \zet_p^{*2} \,, \\[.3em]
  (a_1,a_2,\chi_1,\chi_2) & \!\!\mapsto\!\! & (a_1,a_2, \chi_1{+}b(a_2), \chi_2{-}b(a_1)) \,.
  \end{array}
  \ee
\end{Example}

 \vskip 4em 

{\small \noindent{\sc Acknowledgments:}
JF is supported by VR under project no.\ 621-2013-4207. CS is partially supported 
by the Collaborative Research Centre 676 ``Particles, Strings and the Early 
Universe - the Structure of Matter and Space-Time'' and by the DFG Priority 
Programme 1388 ``Representation Theory''. JF is grateful to Hamburg 
University, and in particular to CS, Astrid D\"orh\"ofer and Eva Kuhlmann, 
for their hospitality when part of this work was done.
JF, CS and AV are grateful to the Erwin-Schr\"odinger-Institute (ESI) for the
hospitality during the program ``Modern trends in topological field theory'' 
while part of this work was done; this was also supported by the network 
``Interactions of Low-Dimensional Topology and Geometry with Mathematical 
Physics'' (ITGP) of the European Science Foundation.
}

 \newpage

 \newcommand\wb{\,\linebreak[0]} \def\wB {$\,$\wb}
 \newcommand\Bi[2]    {\bibitem[#2]{#1}} 
 \newcommand\Epub[2]  {{\em #2} {[\tt #1]}}
 \newcommand\inBO[9]  {{\em #9}, in:\ {\em #1}, {#2}\ ({#3}, {#4} {#5}), p.\ {#6--#7} {\tt [#8]}}
 \newcommand\J[7]     {{\em #7}, {#1} {#2} ({#3}) {#4--#5} {{\tt [#6]}}}
 \newcommand\JJ[7]    {{\em #7}, {#1} {} ({#3}) {#4} {{\tt [#6]}}}
 \newcommand\JO[6]    {{\em #6}, {#1} {#2} ({#3}) {#4--#5} }
 \newcommand\JP[7]    {{\em #7}, {#1} ({#3}) {{\tt [#6]}}}
 \newcommand\Prep[2]  {{\em #2}, preprint {\tt #1}}
 \def\aagt  {Alg.\wB\&\wB Geom.\wb Topol.}     
 \def\adma  {Adv.\wb Math.}
 \def\alnt  {Algebra\wB\&\wB Number\wB Theory}          
 \def\ammm  {Amer.\wb Math.\wB Monthly}
 \def\anop  {Ann.\wb Phys.}
 \def\cntp  {Com\-mun.\wB Number\wB Theory\wB Phys.}
 \def\comp  {Com\-mun.\wb Math.\wb Phys.}
 \def\imrn  {Int.\wb Math.\wb Res.\wb Notices}
 \def\jhep  {J.\wb High\wB Energy\wB Phys.}
 \def\jhrs  {J.\wB Homotopy\wB Relat.\wB Struct.}
 \def\joal  {J.\wB Al\-ge\-bra}
 \def\jram  {J.\wB rei\-ne\wB an\-gew.\wb Math.}
 \def\nupb  {Nucl.\wb Phys.\ B}
 \def\quto  {Quantum Topology}
 \def\phrb  {Phys.\wb Rev.\ B}
 \def\phrl  {Phys.\wb Rev.\wb Lett.}
 \def\tams  {Trans.\wb Amer.\wb Math.\wb Soc.}

\small

\end{document}